\numberwithin{equation}{section}
\newcommand{\ind}[1]{\mathbbm{1}_{\{#1\}}}
\newtheorem{proposition}{Proposition}[section]
\newtheorem{remark}{Remark}
\newtheorem{example}{Example}[section]
\title{Learning Threshold-Type Investment Strategies with Stochastic Gradient Method}
\author{
  \textbf{Zsolt~Nika} \\
  Faculty of Information Technology and Bionics\\
  Pázmány Péter Catholic University\\
  Budapest, Hungary \\
  \texttt{zsolt.nika@itk.ppke.hu} \\
   \and
 \textbf{Miklós~Rásonyi} \\
  Alfréd Rényi Institute of Mathematics\\
   Hungarian Academy of Sciences\\ Budapest, Hungary \\
}
\date{June 2019}
\begin{document}

\maketitle

\begin{abstract}
    
    In online portfolio optimization the investor makes decisions based on new, continuously incoming information on financial assets (typically their prices).
    In our study we consider a learning algorithm, namely the Kiefer--Wolfowitz version of the Stochastic Gradient method,
    that converges to the log-optimal solution in the threshold-type, buy-and-sell strategy class.
    
    The systematic study of this method is novel in the field of portfolio optimization; we aim to establish the theory and practice of Stochastic Gradient algorithm used on parametrized trading strategies.
    
    We demonstrate on a wide variety of stock price dynamics (e.g. with stochastic volatility and long-memory) 
    that there is an optimal threshold type strategy which can be learned. 
    Subsequently, we numerically show the convergence of the algorithm.
    Furthermore, we deal with the typically problematic question of how to choose the hyperparameters 
    (the parameters of the algorithm and not the dynamics of the prices) 
    without knowing anything about the price other than a small sample.
   
\end{abstract}

\paragraph{Keywords}
Stochastic gradient; Log-optimal investment; Online portfolio selection

\section{Introduction}




    In investment there is an approach by technical analysts where the investment decision is based on past data 
    such as price, technical indicators or trading volumes. 
    The decisions are determined by some function of past data called \textit{trading rule} or \textit{strategy function}.
    In algorithmic trading, these decisions are executed automatically by computers \cite{kim2010electronic}.
    One of the most typical subcategories of algorithmic trading is \textit{high-frequency trading}, where favorable decisions must be made
    in seconds or even miliseconds \cite{aldridge2013high}.
    
    Given the nature of these algorithms, those that require huge computational capacities are not efficient since they are slow.
    For example, non-parametric methods or complex machine learning algorithms work well on big data sets with immense computer efforts (see a survey of non-parametric methods e.g. in \cite{li2014online} and a summary in machine learning methods \cite{de2018advances}).
    On the other hand, parametric models that are based on dynamics of the prices or indices may give fairly good results if precise and accurate parameter estimations are available.
    To get satisfactory estimations, again one needs a big data set, and typically decisions are sensitive to the error of the estimations.
    
    To resolve the above-mentioned problems we use Kiefer--Wolfowitz method which lets us 
    (i) make decisions immediately starting at the initial step;
    (ii) process new information/data as they arrive, without needing to wait until we have a big enough data set, as the strategy function improves in every step and 
    (iii) there is no need to estimate the parameters of the dynamics.
    With this method, we aim to optimize \textit{log-utility} investments (maximizing the expected value of the logarithm of the wealth).
    The method is also capable of tracking the changes of the market, 
    which we ignore here in order to investigate the method itself in finance instead of market changes.

    Stochastic Approximation \cite{robbins1951stochastic} (or Robinson--Monro method) is an iterative method to find the root of $f(\theta):=\mathbb{E}[X_t, \theta]=m$, where $X_t$ is a stochastic process, $\theta$ is a prameter and $m$ is a constant.
    Basically, it is a stochastic version of the Newton--Raphson method where there are consecutive observations of the functions loaded with randomness/noise.
    If the derivate of the function exists then the method can be used for optimization.
    When the derivate does not exist or is unknown, Kiefer and Wolfowitz proposed \cite{kiefer1952stochastic} a finite difference approximation based on consecutive observations. This method is a version of the Stochastic Gradient method.
    
    Nonetheless, stochastic approximation or Kiefer--Wolfowitz has not been used for directly optimizing the parametrized strategy as we do here.
    We hope that this introduction to the usage of the method in investment theory will develop further.
    Other works dealt with different approaches, like
    learning the parametrized stopping time for American/Asian options \cite{zhang2011stochastic} or the optimal stopping time of liquidation \cite{yin2006stock}. 
    Other typical fields of applying this algorithm are the estimation of quantiles for CVaRs \cite{chow2014algorithms}, 
    \cite{laruelle2012stochastic}or \cite{kibzun2001discrete}.
    There is also a study about the optimal splitting of orders in \cite{laruelle2011optimal}.
    
    In Section \ref{sec:threshold} we introduce the threshold strategies that can be parametrized in a way that it can be optimized by the Kiefer--Wolfowitz algorithm.
    Then in Section \ref{sec:KW} we show how the algorithm works and that the optimum exists.
    In Section \ref{sec:numerical} we show numerical results how the algorithm performs and we also deal with the problem how to choose the hyperparameters of the algorithm in a suitable way. 
    Throughout this article we make some usual simplifications in investment theory:
    the investment solely contains one risky and one riskless asset,
    of course, these can relaxed.
    Given the nature of the learning method, we only focus on discrete time models.

\section{Threshold strategies in log-optimal investments}\label{sec:threshold}

    In this section, we introduce the financial background in which we want to apply the learning method in the next sections.
    At first, we start with some preliminary information about investment, after which we present our threshold type strategy
    and we discuss how it connects to the theory of log-optimal investments.
    
    \subsection{Portfolio}
    Portfolio investment, mathematically speaking, is an applied field of control theory where the control process
    is the investor's decision regarding in which asset to allocate her/his current wealth, and the independent processes
    are typically the prices of the financial assets.
    Logarithmic utility function is used frequently as an objective function for several reasons.
    
    Let us denote the riskless process $B_t$ and the risky asset $S_t$, where $t\in\mathbb{N}$ is the discrete-time parameter.
    In this study, we do not want to focus on the effect of the interest rate, therefore, the riskless process
    is chosen to be constant (it assumes zero interest rate, so we do not need to discount the prices).
    The value of the portfolio $W_t$ is the wealth of the invester and its time-evolution is typically written as
    \[
    	\frac{W_t}{W_{t-1}} = (1-\pi_t)\frac{B_t}{B_{t-1}} + \pi_t \frac{S_t}{S_{t-1}},
    \]
    where $\pi_t\in[0,1]$ is an $\mathcal{F}_{t-1}$-measurable function, called \textit{strategy}, i.e. 
    the fraction of how much of the current wealth should be split between the two assets.
    Clearly, $\pi_t$ can only be a function of information up to $t-1$ 
    since the investor is not able to look in the future.
    In financial mathematics, the log increment of the price has several well-established
    properties mathematically, which are called the stylized facts of stock prices \cite{cont2001empirical}. It is convenient to
    build a dynamics on the log-return and not on the stock price.
    The log-return
    \[
    	H_t:= \log\left( \frac{S_t}{S_{t-1}} \right).
    \]
    Since the riskless asset's price is constant (therefore their fraction is one) we can simplify the wealth as
     \begin{equation}\label{eq:growth}
    	\frac{W_t}{W_{t-1}} = 1-\pi_t + \pi_t e^{H_t}.
    \end{equation}
    
    The investor's objective is to maximize the utility function
    \begin{equation}
    	\lim_{t\rightarrow\infty}\frac{1}{t}\mathbb{E}[\log(W_t)].
    \end{equation}
    
    It has been showed in \cite{algoet1988asymptotic} that it can be maximized if the strategy $\pi_t$ is chosen 
    such as to maximize the conditional expectation of the growth
    \begin{equation*}
    	\tilde{g}:=\mathbb{E}[\log(W_t/W_{t-1})|\mathcal{F}_{t-1}] \quad\rightarrow \quad\text{maximize},
    \end{equation*}
    or with our financial conditions it equals to
    \begin{equation}\label{logopt}
    	\tilde{g}:=\mathbb{E}[\log(1-\pi_t + \pi_t e^{H_t})|\mathcal{F}_{t-1}] \quad\rightarrow \quad\text{maximize}.
    \end{equation}
    The conditional expectation $\tilde{g}$ is a random variable and measurable on $\mathcal{F}_{t-1}$.
    The condition on $\mathcal{F}_{t-1}$ contains a lot more information that is accessible for an investor or anyone.
    In an algorithm we need to specify what information we use (for example past prices or stock market indices)
    therefore we can only optimize a conditional mean where the condition is a random variable.
    We denote it as
    \begin{equation}
        \tilde{g}(X) = \mathbb{E}[\log(1-\pi_t + \pi_t e^{H_t})|X],
    \end{equation}
    where $X$ is an $\mathcal{F}_{t-1}$-measurable (multivariate) random variable.
    
    In the following sections we show how to parametrize the strategy process $\pi_t$ to be able to learn the log-optimal strategy and then how to choose the variable $X$.

    \subsection{Dynamics}
    The present method can be used on several type of stock price dynamics.
    It is important to use such dynamics
    where (i) the optimal strategy exists and results in a portfolio which achieves its optimality
    and (ii) the price dynamics is realistic, plausible.
    For this reason we rely on the time series class introduced in \cite{nika2018log} called 
    \textit{Conditionally Gaussian} and one of its example, the 
    \textit{Discrete Gaussian Stochastic Volatility} (DGSV):
    \begin{subequations}\label{eq:dgsv}
	\begin{equation}
	    H_t = \mu + \alpha H_{t-1} + \sigma e^{Y_t}\left(
	\rho\varepsilon_t + \sqrt{1-\rho^2}\eta_t\right),
	\end{equation}
	\begin{equation}
	    Y_t=\sum\limits_{j=0}^{\infty} \beta_{j} \varepsilon_{t-j},
	\qquad \beta_{j},\mu,\sigma\in\mathbb{R};\,  \alpha,\rho\in[-1,1].
	\end{equation}
    \end{subequations}
    This stock price model posseses several desirable properties:
    its statistical moments and auto-correlation function are realistic,
    includes long-memory and leverage effect as well.
    The existence of the log-optimal solution is provided in \cite{nika2018log}.
    
    We also use simplier models to understand better the behavior of the algorithm.
    Such as AR(1) or MA($\infty$) processes:
	\begin{equation}\label{eq:ar1}
	    H_t = \mu + \alpha H_{t-1} + \sigma \varepsilon_t \quad\text{: AR(1)},
	\end{equation}
	\begin{equation}\label{eq:ma}
	    H_t=\mu + \sum\limits_{j=0}^{\infty} \beta_{j} \varepsilon_{t-j}\quad\text{: MA($\infty$)}.
	\end{equation}
    The coefficients $\beta_j:=b_0(1+j)^{-b}$ and choosing $b_0>0$ and $0.5<b<1$ ensure that \ref{eq:ma} has long memory.
    
    \subsection{Threshold strategy}
    The log-optimal strategy in (\ref{logopt}) only can be calculated if the parameters of the stock price dynamics are known. 
    The exact form of the strategy in unknown, one need to use numerical integration to get the optimal decision at every 
    timestep $t$.

    In Section 3 of \cite{nika2018log} an approximative strategy of the log-optimal was proposed.
    They showed that on realistic data it performs well, though
    they did not give mathematical estimation of the error.
    This approximative strategy reduces the space of possible decisions from $\pi_t \in[0,1]$ to two states $\pi \in\{0,1\}$.
    With realistic log-return data this restriction does not result in a considerable loss and
    it can be used with learning algorithms while the log-optimal solution can't.
    
    The idea can be used for any parametric dynamics if the conditional expectation can be calculated.
    The proposed approximative strategy in \cite{nika2018log} is
    \begin{equation}\label{eq:lin}
        \pi_t^{lin} = \begin{cases}
        1, \quad \text{if } \mathbb{E}[H_t | \mathcal{F}_{t-1}] > 0,\\
        0, \quad \text{otherwise,}
        \end{cases}
    \end{equation}
    which is a consequence of the requirement in (\ref{logopt}) with first-order Taylor-expansion.
    That is, the investor should buy only risky asset if its conditional expected value is higher than 0.
    This strategy lies in the field of threshold strategy.
    
    We remark, that we are working now in 0 interest rate environment.
    Without this assumption the trading rule modifies to buy whenever the conditional expectation is higher than
    the interest rate.
    
    Because of the structure of the strategy, we call it here \textbf{threshold} strategy.
    We do not need the upperscript $lin$ since we are only investigating this type of strategy with the Stochastic Gradient method.
    
    In most parametric models the conditional expectation can be calculated therefore we end up with a function of past data that we call here
    \textbf{threshold-function}:
    $f(\text{past data}):=\mathbb{E}[H_t | \text{past data}]$. 
    An equivalent form of \ref{eq:lin} using the threshold function is
    \begin{equation}
        \pi_t = \ind{ f(\text{past data}) > 0 },
    \end{equation}
    where the function $\mathbf{1}_{\{x>0\}}$ is 1 if $x>0$ and 0 otherwise.
    The conditional expectation of the growth (\ref{logopt}) that we want to optimize here with the indicator function is
    \begin{equation}
        \tilde{g} = \mathbb{E}[\log(1 - \ind{f(\text{past data}) > 0} + 
        \ind{f(\text{past data}) > 0)}e^{H_t}|\text{past data}].
    \end{equation}
    This function is still a random variable because it is a function of past data.
    
    In the following subsections we unfold some cases how to handle "past data", but
    of course, it is the investors duty to tell, which past values to use.
    Proposition \ref{th:markovopt} gives help how and what to take into consideration when someone chooses values
    from past data.
    
    With Stochastic Gradient method we are able to optimize an expected value with respect to some parameters.
    Therefore in the following we will optimize the exptected value of $\tilde{g}$.
    If we parametrize the conditional growth by $\theta$ which is a one or multivariate real number, than the optimization task is to find the maximum of the \textbf{growth}
    \begin{equation}\label{eq:param-growth}
        g(\theta) := \mathbf{E}[\tilde{g}(X_{t-1}, \theta)],
    \end{equation}
    where $\tilde{g}(X_{t-1}, \theta)$ is a parametrized version of (\ref{logopt}).
    
    \subsection{Markovian strategy}
    Let us assume the investor uses only one value that is available before investing at time $t$ and call this variable $X_{t-1}$. 
    It can be past stock returns or an index or something more complex, for example the weighted average of the past returns.
    A natural choice can be the previous value of the return, that is $H_{t-1}$ and we stick to this simple case here.
    
    The conditional growth in (\ref{logopt}) with Markovian strategy:
    \begin{equation}\label{eq:markovgrowth}
        \mathbb{E}[\log(1-\pi_t+\pi_te^{H_t})|X_{t-1}].
    \end{equation}
    We need to parametrize the threshold function in the strategy to be able to use it with Stochastic Gradient method.
    A convenient choice is the linear function; in this paper we do not relieve this restriction but we mention that $X_{t-1}$ can be a function of $H_{t-1}$ though.
    \begin{equation}\label{eq:markov-strat}
        \pi_t:=\ind{X_{t-1}>\theta}.
    \end{equation}
    The optimizable  growth in (\ref{eq:param-growth}) is
    \begin{equation}\label{eq:markovgrowth2}
        g(\theta) =  \mathbb{E}[H_t \ind{X_{t-1}>\theta}].
    \end{equation}
    In Section \ref{sec:KW} we will optimize this function
    with the Kiefwer--Wolfowitz method.
    
    The theorem below shows the optimal threshold ($\theta^*$) of the Markovian strategy.
    \begin{proposition}\label{th:markovopt}
        Let us assume that there is only one root of the differentiable function
        $\phi(x):= {\mathbf{E}[H_t|X_{t-1}=x]}$ and that $\phi(x)>0$ if $x>0$.
        Moreover, let us assume that the return process is stationary.
        Then the root of $\phi(x)$ is the unique optimal threshold:
        \begin{equation}
            \theta^* = \{x|\phi(x)=0\}.
        \end{equation}
    \end{proposition}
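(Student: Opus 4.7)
The plan is to reduce $g(\theta) = \mathbb{E}[H_t \mathbf{1}_{\{X_{t-1}>\theta\}}]$ to an integral of $\phi$ against the distribution of $X_{t-1}$ and then use the hypothesized sign structure of $\phi$ to show that $\theta^*$ is the optimal cut-off. Concretely, by stationarity I can drop any dependence on $t$ and by the tower property write
\begin{equation*}
g(\theta)=\mathbb{E}\bigl[\mathbb{E}[H_t\mid X_{t-1}]\,\mathbf{1}_{\{X_{t-1}>\theta\}}\bigr]=\mathbb{E}\bigl[\phi(X_{t-1})\,\mathbf{1}_{\{X_{t-1}>\theta\}}\bigr].
\end{equation*}
Thus, maximizing $g$ is the same as choosing the half-line $\{x>\theta\}$ that maximally covers the region where $\phi$ is positive and excludes the region where $\phi$ is negative.

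Next I would pin down the sign of $\phi$. Differentiability implies continuity; combined with the hypothesis that $\phi$ has a unique root and $\phi(x)>0$ for every $x>0$, the root $\theta^*$ must be non-positive, and by the intermediate value theorem $\phi$ cannot change sign except at $\theta^*$. Because $\phi>0$ on $(0,\infty)\subset(\theta^*,\infty)$ and there is no further zero, $\phi>0$ on all of $(\theta^*,\infty)$ and, by the same sign-preservation argument, $\phi<0$ on $(-\infty,\theta^*)$.

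With the sign structure in hand, I would compare $g(\theta)$ to $g(\theta^*)$ for an arbitrary $\theta$. If $\theta>\theta^*$,
\begin{equation*}
g(\theta^*)-g(\theta)=\mathbb{E}\bigl[\phi(X_{t-1})\,\mathbf{1}_{\{\theta^*<X_{t-1}\le\theta\}}\bigr]\ge 0,
\end{equation*}
since $\phi>0$ on the event in the indicator; symmetrically, if $\theta<\theta^*$,
\begin{equation*}
g(\theta^*)-g(\theta)=-\mathbb{E}\bigl[\phi(X_{t-1})\,\mathbf{1}_{\{\theta<X_{t-1}\le\theta^*\}}\bigr]\ge 0,
\end{equation*}
because $\phi<0$ on that event. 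Hence $\theta^*$ is a maximizer of $g$.

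The subtlety, and the main obstacle I expect, is strict uniqueness: both displays above are only strict inequalities provided $X_{t-1}$ charges every non-empty open interval around $\theta^*$. In the proposition as stated, this non-degeneracy is not spelled out, so in the write-up I would either invoke a tacit assumption that the law of $X_{t-1}$ has full support (or at least a positive density on a neighbourhood of $\theta^*$), or weaken ``unique'' to ``an'' optimizer and identify the set of optimal thresholds with the complement of the interior of the support around $\theta^*$. Everything else is a direct calculation and essentially mechanical.
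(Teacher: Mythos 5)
Your argument is correct and, although it opens with the same tower-property decomposition $g(\theta)=\mathbb{E}\bigl[\phi(X_{t-1})\ind{X_{t-1}>\theta}\bigr]$ as the paper, the decisive step is genuinely different. The paper assumes $X_{t-1}$ has a density $f_X$, writes $g(\theta)=\int_\theta^\infty v(y)f_X(y)\,dy$ with $v=\phi$, and merely imposes the first-order condition $-v(\theta)f_X(\theta)=0$, so it only exhibits a critical point; you instead compare $g(\theta^*)$ with $g(\theta)$ directly via the sign of $\phi$ on either side of $\theta^*$, which yields \emph{global} optimality, needs no differentiation in $\theta$, and needs no density for $X_{t-1}$. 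Your closing caveat about uniqueness is also a genuine point the paper glosses over: with the paper's argument, wherever $f_X$ vanishes one gets further critical points, and strict optimality of $\theta^*$ fails exactly when the law of $X_{t-1}$ does not charge a neighbourhood of $\theta^*$, as you say.

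One caveat you share with the paper rather than introduce: your claim that $\phi<0$ on $(-\infty,\theta^*)$ does not follow from the stated hypotheses. Continuity plus uniqueness of the root only force a constant sign to the left of $\theta^*$, and that sign could be positive, with $\phi$ touching zero from above; for instance $\phi(x)=x^2$ satisfies all the stated assumptions with root $\theta^*=0$, yet then every $\theta<\theta^*$ does at least as well and strictly better whenever $X_{t-1}$ charges $(-\infty,0)$, so $\theta^*$ is not the optimizer. The intended (implicit) hypothesis is that $\phi$ changes sign at its root, which the paper's first-order-condition proof also silently relies on, so this is a defect in the proposition's phrasing rather than a gap created by your approach; it would be worth stating that assumption explicitly in your write-up.
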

    
    \begin{proof}
        For the sake of simplicity assume that $X_{t-1}$ has a pdf.
        The conditional expectation of the growth is
        \begin{equation*}
            g(\theta) =  \mathbb{E}[H_t \ind{X_{t-1}>\theta}] = \mathbb{E}\left[\mathbb{E}[H_t|X_{t-1}]\ind{X_{t-1}>\theta}  \right].
        \end{equation*}
        Since $\mathbb{E}[H_t|X_{t-1}]$ is a function of $X_{t-1}$, call it $v(X_{t-1})$ and denote the pdf of $X_{t-1}$ as $f_X(x)$, the expected value is
        \begin{equation*}
            \int_\theta^\infty v(y) f_X(y) dy.
        \end{equation*}
        The integral has optimum where
        \[
            -v(y)f_X(y) = 0.
        \]
        Since $f_X(y)$ is non-negative therefore the optimal threshold is where
        $v(y) = 0$ which conclude our statement.
        
    \end{proof}
    
    \begin{remark}\label{rem:meanindependent}
        The main message of the theorem is that 
        only those information can be used in
        the optimization algorithm which are not \textit{mean-independent} \cite{cameron2005microeconometrics} from the
        price process. The concept of mean independence is well-known in econometrics
        which is a stronger property than uncorrelation but weaker than the stochastic independence. 
    \end{remark}
    
    \begin{remark}
        A conclusion of Proposition \ref{th:markovopt} is that the linear approximative strategy is log-optimal if the strategy can only be 0 or 1. This is only true in the univariate case.
    \end{remark}
    
    For a simple example, let us model the log-return as an autoregressive process
    and let us use the previous log-return value as "past data".
    \begin{example}[AR(1)]
        Let $H_t$ defined as in (\ref{eq:ar1}).
        The conditional expectation is
        \begin{equation*}
            \mathbf{E}[H_t|H_{t-1}=x] = \mu + \alpha x.
        \end{equation*}
        Its root, that is the optimal threshold is
        \begin{equation*}
            \theta^* =-\frac{\mu}{\alpha}.
        \end{equation*}
        When $\alpha<0$, then the assumption in Theorem \ref{th:markovopt} about $\phi(x)>0$ if $x>0$ is false, but the optimality is true if we change the inequality sign in
        (\ref{eq:markov-strat}) to $\pi_t:=\ind{X_{t-1}<\theta}$.
        
        (We remind the reader, that the expected log-return is different, $\mu/(1-\alpha)$.)
    \end{example}
    As we can see from the example, to determine the threshold we either need to estimate $\mu$ and $\alpha$ from a long enough sample or either we learn the value of $\theta^*$
    by using Stochastic Gradient. In a more realistic dynamics there are more than two
    parameters that needed to be estimated. Furthermore the threshold is very sensitive
    to the estimation error of $\alpha$.
    \begin{example}[DGSV]
        Let the log-return $H_t$ be a DGSV process according to (\ref{eq:dgsv}).
        Its conditional expectation is
        \begin{equation*}
            \mathbf{E}[H_t|H_{t-1}=x] = \mu + \alpha x + \sigma\rho\mathbb{E}[e^Y_t|H_{t-1}].
        \end{equation*}
        The conditional expectation is unknown but we will see later in the numerical results that there is a unique solution.
    \end{example}

    \subsection{Non-Markovian strategy - multivariate case}
    If the investor rather would like to use more information for example to handle
    long memory or information about volatility, it is also possible.
    We show here two possible choices that can be used, one strategy uses multiple
    past return data, the other one uses volatility information as extra.
    The strategies
    \begin{subequations}
    \begin{equation}\label{eq:sratmulti}
        \pi_t=\ind{H_{t-1}+\theta^2H_{t-3}+\theta^3H_{t-3}+\dots>\theta^1}\quad
        \text{or}
    \end{equation}    
    \begin{equation}\label{eq:stratdgsv}
        \pi_t=\ind{H_{t-1}+\theta^2 e^{\nu_{t-1}}>\theta^1},
    \end{equation}
    \end{subequations}
    where $\theta^1,\theta^2,\dots$ are the parameters we wish to optimize and
    $\nu_{t-1}$ is an estimation of the logarithm of the volatility based on the information of  $\mathcal{F}_{t-1}$ (that is, $\nu_{t-1}:=\mathbb{E}[Y_t|\mathcal{F}_{t-1}]$).
    The design of the second strategy with the log-volatility may seem peculiar but the linear approximation strategy of the log-optimal in \cite{nika2018log} has been showed that it is a linear function of $H_{t-1}$
    and $\nu_{t-1}$.

An important aspect of the strategy choice with volatility, is that we are able to catch leverage effect with it. As we noted in Remark \ref{rem:meanindependent}, only those processes should be used in the threshold function which are not mean-indepenent of the log-return. Leverage effect is defined in several ways, anyhow it is a connection between stock price change and past volatility (i.e. in our case between $H_{t}$ and $\nu_{t-1}$). 
Noises in the price that have no leverage effect, for example the noise term $\eta_t$ in \ref{eq:dgsv}, have no advantagesin the investment.

    Leverage effect has a prominent role, since it is the only way how we can utilize volatility but the long memory 
    typically appears in volatility.
    As it has been show in \cite{cont2001empirical}, the long memory is hidden in volatility and not in the drift part of the process.
    
    In the multivariate case there is no closed form of the optimal $\theta^i$ values.
    Of course, the $\partial g/\partial \theta^i = 0$ must be satisfied.
    For example, in two dimensions version of (\ref{eq:sratmulti}) the optimal $\theta$'s must satisfy the
    \begin{subequations}
    \begin{equation}
        \partial g/\partial \theta^1 =\int_{-\infty}^\infty  v(\theta^1-\theta^2x,x)f(\theta^1-\theta^2x,x) dx = 0,
    \end{equation}
    \begin{equation}
        \partial g/\partial \theta^2 =\int_{-\infty}^\infty - x v(\theta^1-\theta^2x,x)f(\theta^1-\theta^2x,x) dx = 0,
    \end{equation}
    \end{subequations}
    equations, where $v(x,y):=\mathbb{E}[H_{t} |H_{t-1}=x, H_{t-2}=y ]$ and $f(x,y)$ is the joint pdf of $(H_{t-1}, H_{t-2})$.
    The equations are more complicated in the DGSV case if we wish to include the log-volatility $\nu_{t-1}$
    then we need to replace the variable $x\rightarrow \exp(x)$ and reinterpret the pdf and conditional mean (by using $e^{\nu_{t-1}}$ instead of $H_{t-2}$.
    These are unknown functions in general and we could only estimate the pdf and the conditional expectation
    based on data which is contrary to our goals.
    
    It does not mean that the Kiefer--Wolfowitz algorithm cannot converge to the optimal $\theta$'s, only
    that we cannot calculate their optimal values in advance. If the dynamics are known then Monte-Carlo method can be used to estimate the optimal value. This is what we use in the numerical simulations.
    
    Here we would like to show the basics of how to use the Kiefer--Wolfowitz algorithm for investment purposes.
    Other processes could also be used.
    
\section{Kiefer--Wolfowitz algorithm}\label{sec:KW}
    With the Kiefer-Wolfowitz optimization procedure we are searching for the maximum of 
    (\ref{eq:param-growth}).
    \paragraph{Univariate case:} the task is to find the optimum threshold $\theta^*\in\mathbb{R}$
    \begin{equation}
        \text{maximize}_{\theta} \quad g(\theta) := \mathbb{E}[H_t\ind{X_{t-1}>\theta}],
    \end{equation}
    the random processes $H_t$ and $X_{t-1}$ are both univariate.
    Let us denote the growth at time $t$ by $G(\theta; H_t, X_{t-1}):=H_t\ind{X_{t-1}>\theta}$.
    The Stochastic Gradient algorithm uses the finite differences of the growth:
    \begin{equation}
        \theta_{t+1} = \theta_{t} + a_t\frac{G(\theta_t-c_t; H_t, X_{t-1}) - G(\theta_t+c_t; H_t, X_{t-1})}{c_t},
    \end{equation}
    where the step-size $a_t$ and the step-size of the finite difference $c_t$ are real-valued sequences.
    The fraction is the approximation of the gradient.
    
    Since the growth $G(\theta;\dots)$ is the indicator function of $\theta$, therefore its finite difference can be simplified to a range.
    For greater clarity we denote the range $[x-c, x+c]$ as $[x\pm c]$.
    Then the algorithm can be written as
    \begin{equation}\label{eq:range}
        \theta_{t+1} = \theta_{t} + a_t \frac{H_t\ind{X_{t-1}\in[\theta_t\pm c_t]}}{c_t}.
    \end{equation}
    This formalism will help us in the latter to better understand the usage of the method.
    
    It is impossible to prove in general but via some examples in the Section \ref{sec:numerical}
    we show nuerically that this recursive update converges to the optimum what we showed in the previous section:
    \begin{equation}
        \theta_t \xrightarrow{L^2} \theta^*,
    \end{equation}
    the convergence is in $L^2$, i.e. we can show the convergence of the Mean Squared Error (MSE).
    If the convergence is accomplished, its speed has power-law typically.
    
    In general, there is no straightforward way to choose the hyperparameters.
    In Section \ref{sec:numerical} we show some ideas 
    on which basis we can choose the hyperparameters.
    \smallskip
    
    In real life investment the financial environment is not static, the dynamics of prices can change and new factors can appear/disappear, therefore optimal strategy changes as well.
    To this end, in practice investors use constant and very small step sizes $a_t$ and $c_t$
    which able to track down the changes of the optimal values.
    In this paper we do not aim to focus on changes of the market.
    
    \paragraph{Multivariate case:} the algorithm works in the same way,
    each dimension of the parameter are updated separatly with no cross-effect.
    For example in the case of known log-volatility (\ref{eq:stratdgsv}) the growth is ${G(\theta^1, \theta^2; H_t, H_{t-1}, \nu_{t-1})}$
    \begin{subequations}
    \begin{equation}
        \theta^1_{t+1} = \theta^1_{t} + a_t^1 \frac{H_t\ind{H_{t-1}\in[\theta_t^1-\theta_t^2 e^{\nu_{t-1}}\pm c^1_t]}}{c^1_t}
    \end{equation}
    \begin{equation}
        \theta^2_{t+1} = \theta^2_{t} + a^2_t \frac{H_t\ind{H_{t-1}\in[\theta_t^1-\theta_t^2 e^{\nu_{t-1}}\pm c^2_te^{\nu_{t-1}}]}}{c^2_t}
    \end{equation}
    \end{subequations}


\section{Numerical Results}\label{sec:numerical}
    The critical part of every algorithm is the choice of the hyperparameters.
    In their paper, J. Kiefer and J. Wolfowitz \cite{kiefer1952stochastic} also address the issue 
    of parameter-choice though they were able to give exact and 
    sufficient conditions in a simplier context.
    These conditions are typical requirements and our model satisfy them as well:
    \begin{enumerate}
        \item $c_t \rightarrow 0$.
        \item $\sum_{t=1}^\infty a_t = \infty$, that is, the algorithm can reach any state.
        \item $\sum_{t=1}^\infty a_t c_t < \infty$.
        \item $\sum_{t=1}^\infty a_t^2 c_t^{-2} < \infty$.
    \end{enumerate}
    A usual first guess choice is $a_t=t^{-1}$ and $c_t=t^{-1/3}$.
    
    Analyzing the growth function $g(\theta)$ in the univariate case help us to construct the step-sizes in a suitable way.
    Figure \ref{fig:hill} and (\ref{eq:range}) make it clear that $\theta_t$ must stay in the same range as $X_{t-1}$, since $X_{t-1}\not\in [\theta_t \pm c_t],\, \forall t\in\mathbb{N}$ would result in constant $\theta_t$. In the numerical simulations we only show results about the $X_{t-1}:=H_{t-1}$ case.    
    On the two example we can make the following remarks:
    \begin{itemize}
        \item $g(\theta\rightarrow -\infty) = \mathbb{E}[H_t]$, low $\theta$ means that $\pi_t=1$, that is, the wealth equals to the price of the stock.
        \item $g(\theta\rightarrow \infty) = 0 $, high $\theta$ means
        that $\pi=0$, the wealth equals to the price of the bond.
        \item In the simple case when $H_t$ is an autoregressive process and also when it has the more complex, realistic dynamics DGSV, there is a unique $\theta^*$ that can be calculated.
        \item If $\theta_t$ takes value out of the typical value of $H_t$ where the derivate of $g(\theta)$ is zero then it is hopeless for the algorithm to return and it stays there.
    \end{itemize}
    
    To overcome on the problem of the last remark we make some modifications on the algorithm.
    First, the inital value $\theta_0$ must be estimated on a small sample of $H_t$.
    In every realization we used 10 data points to initialize $\theta_0:=\sum_{t=1}^10 H_t/10$.
    This very small sample is already enough for the algorithm to start from a relatively good point.
    Second, we cannot let the algorithm to take any large step. A general solution for this is to use a project $\theta_t$ on a subspace. In our case we do a truncation on the known range of $H_t$:
    \[
        \theta_t = \begin{cases}
        \min_{1\leq j \leq t} H_j , \quad \text{if } \tilde{\theta}_t <  \min_{1\leq j \leq t} H_j,\\
        \tilde{\theta}_t, \quad \text{otherwise},\\
        \max_{1\leq j \leq t} H_j , \quad \text{if } \tilde{\theta}_t >  \max_{1\leq j \leq t} H_j,
        \end{cases}
    \]
    where $\tilde{\theta}:= \theta_{t-1} + a_t H_t\ind{H_{t-1}\in[\theta_{t-1}\pm c_{t-1}]}/c_{t-1}$.

    \begin{figure}
        \centering
        \begin{subfigure}{.5\textwidth}
          \centering
          \includegraphics[width=0.8\linewidth]{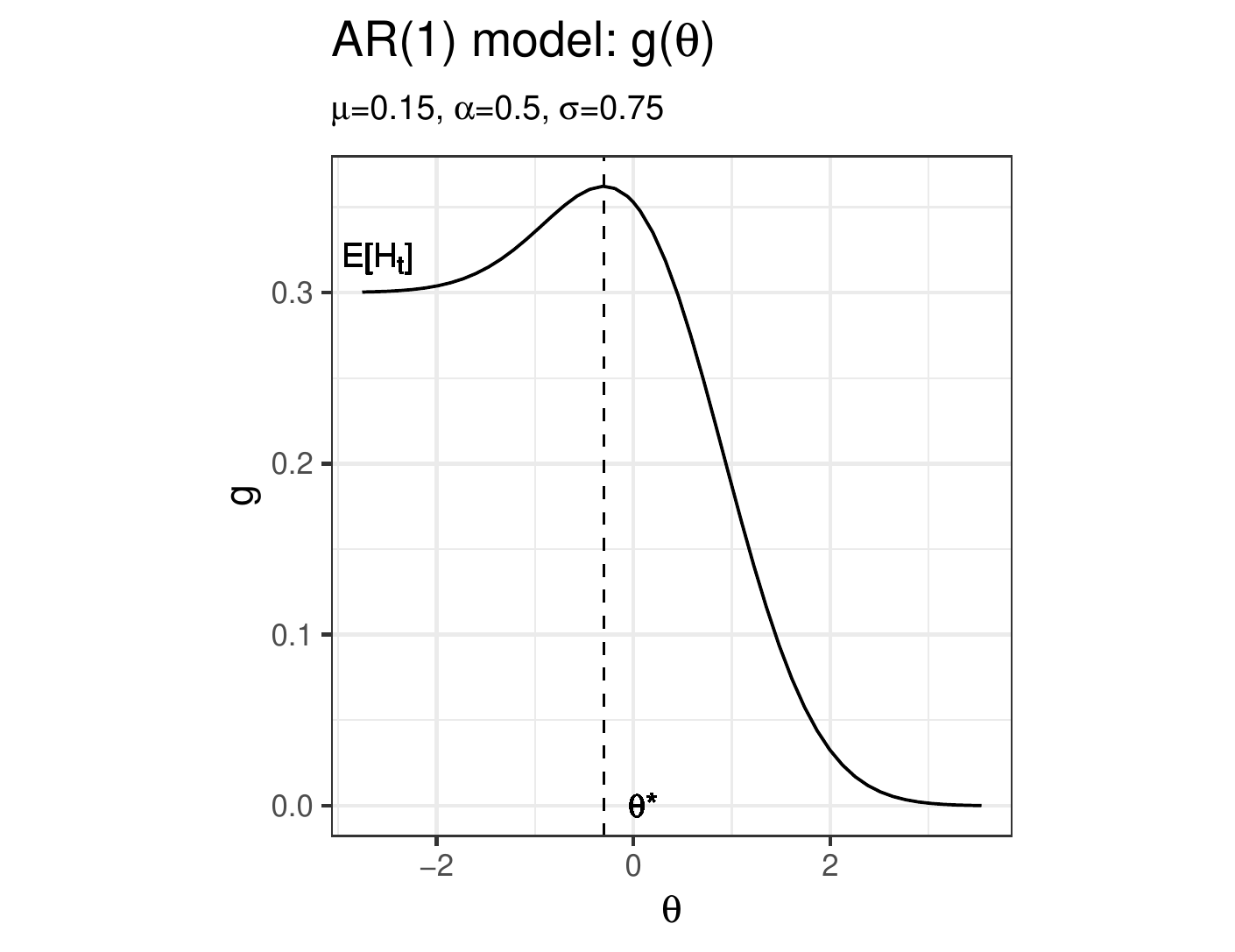}
          \caption{Log-return is an AR(1) process}
          \label{fig:hillar1}
        \end{subfigure}%
        \begin{subfigure}{.5\textwidth}
          \centering
          \includegraphics[width=0.8\linewidth]{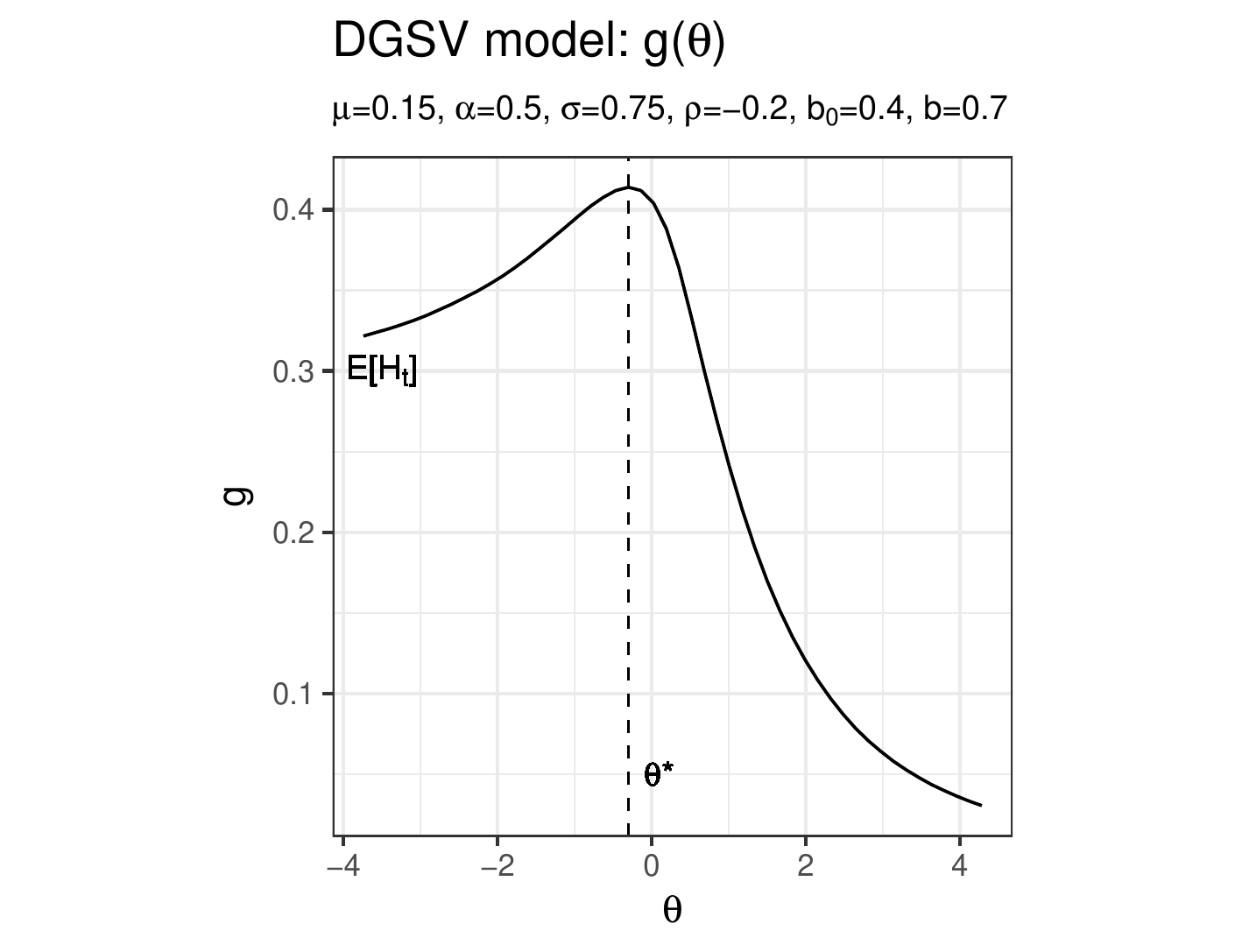}
          \caption{Log-return is a DGSV process}
          \label{fig:hilldgsv}
        \end{subfigure}
        \caption{Function $\theta \rightarrow g(\theta)$. Both plot shows $\theta$ values on the 0.01 and 0.99 percentile range of $H_t$.}
        \label{fig:hill}
    \end{figure}
    
    Using the simple parametrization $a_t=t^{-1}$ and $c_t=t^{-1/3}$ can work in a simple setting.
    Figure \ref{fig:kw} show hot the simple choice of the hyperparameters work.
    The simulations were executed with $N=25$ realizations and for $T=50\,000$ time steps.
    The Mean Squared Error (MSE) is an approximation of the $L^2$ error.
    The log-log scale plot of the error shows that MSE has power law decays in both cases.
    
    The requirement, that $H_{t-1}$ must stay in the range $[\theta_t \pm c_t]$ in a significant part of the time fails if we scale the process. This problem can be handled if we scale somehow the steps of the algorithm. Since the problem is in the step function $H_{t-1}\in[\theta_t \pm c_t]$,
    the steps $c_t$ has to reflect the scale of the process ($H_{t-1}$ and $\theta_t$ are on the same scale). If we re-scale the $c_t$ variable then we need to compensate the $a_t/c_t$ term as well.
    Therefore the steps are the following:
    \begin{equation}
        a_t = K t^{-p}, \quad c_t = K t^{-q},
    \end{equation}
    where $K$ equals to the standard deviation of $H_t$.
    It could be an estimation of the standard deviation but for simplicity we used the whole dataset to estimate it.
    
    \begin{figure}
        \centering
        \begin{subfigure}{.5\textwidth}
          \centering
          \includegraphics[width=1\linewidth]{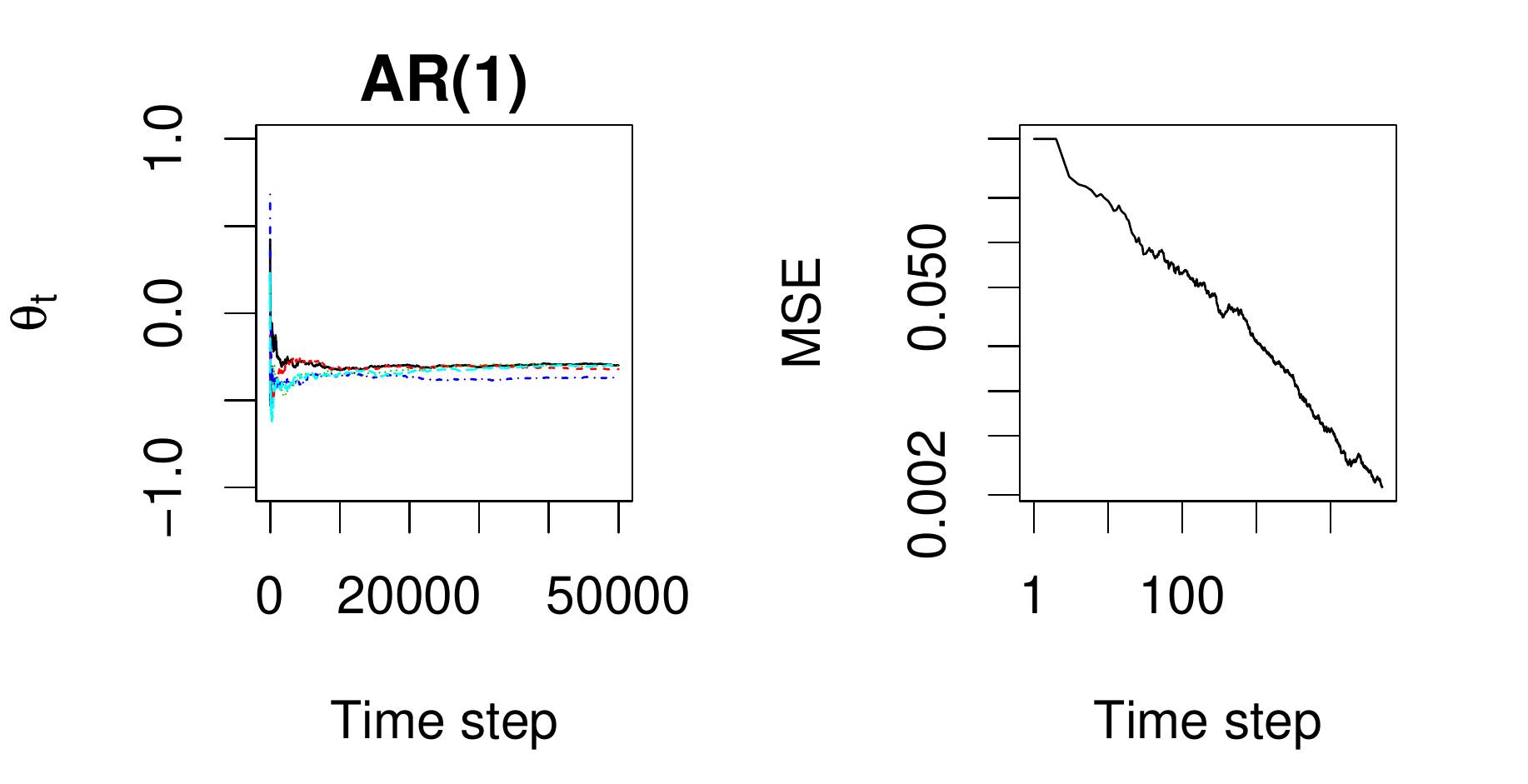}
          \caption{A subfigure}
          \label{fig:kwar1}
        \end{subfigure}%
        \begin{subfigure}{.5\textwidth}
          \centering
          \includegraphics[width=1\linewidth]{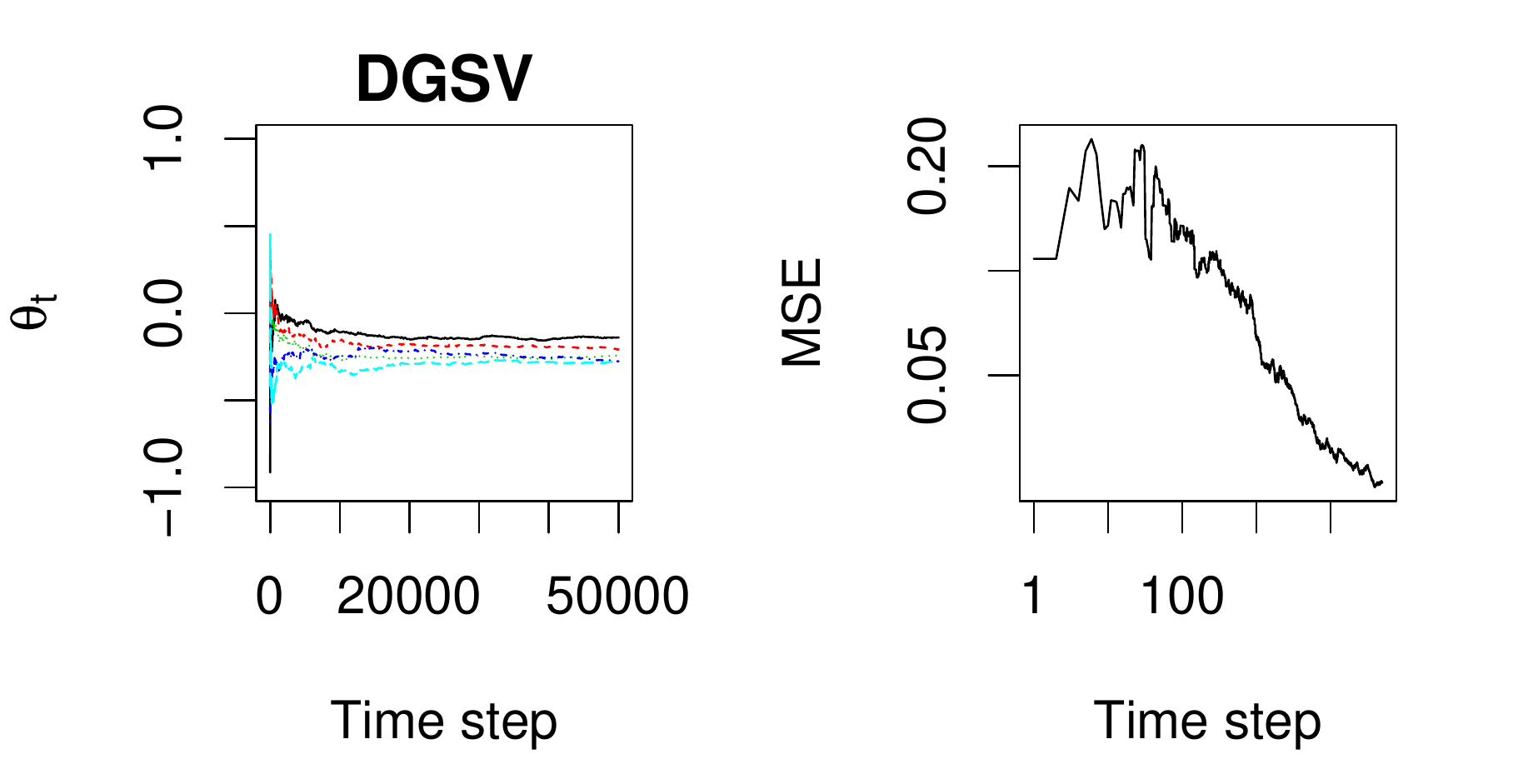}
          \caption{A subfigure}
          \label{fig:kwdgsv}
        \end{subfigure}
        \caption{Convergence of the Kiefer--Wolfowitz algorithm in $L^2$ for the some process as in \ref{fig:hill}. The left figures shows 5-5 realizations, while the right figures shows the Mean Squared Error (MSE) of all the realizations.   }
        \label{fig:kw}
    \end{figure}
    
    The performance of using the scaling factor $K$ on both $a_t$ and $c_t$ is showed on Table \ref{tab:scaling}!!! and
    Figure !!!. The table shows the Mean Squared Error at $t=T=100\,000$, while the figure shows the function $t\rightarrow MSE_t$ with different scaling.
    Parameter settings of the table and the figure is defined below in (\ref{eq:data1}) and (\ref{eq:data2}).
    In Dataset-2, when $\alpha$ is smaller, the Mean Squared Error is higher despite that the process's variation is higher (in the AR(1) case the variation is $\sigma^2/(1-\alpha^2) $). 
    This is because the lower the $\alpha$ the less information we have, it is more difficult to learn.
    Figures \ref{fig:scalingar1} and \ref{fig:scalingdgsv} show thatwithout scaling the algorithm at first wait until $c_t$ achieves a suitable size, while using scaling speeds up this and the algorithm uses the appropriate $c_t$'s.
    The numerical results also show that the best way to scale the process is using the fivefold of the standard deviation of the process.
    
    \noindent Dataset 1:
    \begin{equation}\label{eq:data1}
        \mu = 0.01,\, \alpha = 0.5,\, \sigma = 0.05,\, \rho = -0.2,\, b_0 = 0.4,\, b = 0.7.
    \end{equation}
    Dataset 2:
    \begin{equation}\label{eq:data2}
        \mu = 0.005,\, \alpha = 0.2,\, \sigma = 0.05,\, \rho = -0.2,\, b_0 = 0.4,\, b = 0.7.
    \end{equation}
    (In the AR(1) only $\mu,\alpha,\sigma$ make sense.)
    \begin{table}[!htb]
			\begin{center}
				\begin{tabular}{@{}lccccc@{}}
					\toprule
					Scaling &  \multicolumn{2}{c}{AR(1)}& &  \multicolumn{2}{c}{DGSV} \\
		         \cline{2-3}  \cline{5-6}\\
					  &  Dataset-1 ($\times 10^{-6}$)  & Dataset-2 ($\times 10^{-5}$) & & Dataset-1 ($\times 10^{-6}$) & Dataset-2 ($\times 10^{-5}$)\\
 				 \cline{2-6}\\
		No scaling               & 7.8  &  1.8  && 6.7 & 19.2 \\
		$K = $st. dev.$(H_t)$    & 11.4 &  17.9 && 53  & 120.0\\
		$K = $st. dev.$(H_t)5$   & 1.7  &  1.6  && 20  & 8.8  \\
				 \bottomrule
				\end{tabular}
			\end{center}
			\caption{Performance of the algorithm with different scaling of the steps $a_t$ and $c_t$. }
			\label{tab:scaling}
		\end{table}
    \begin{figure}
        \centering
        \includegraphics[width=1\linewidth]{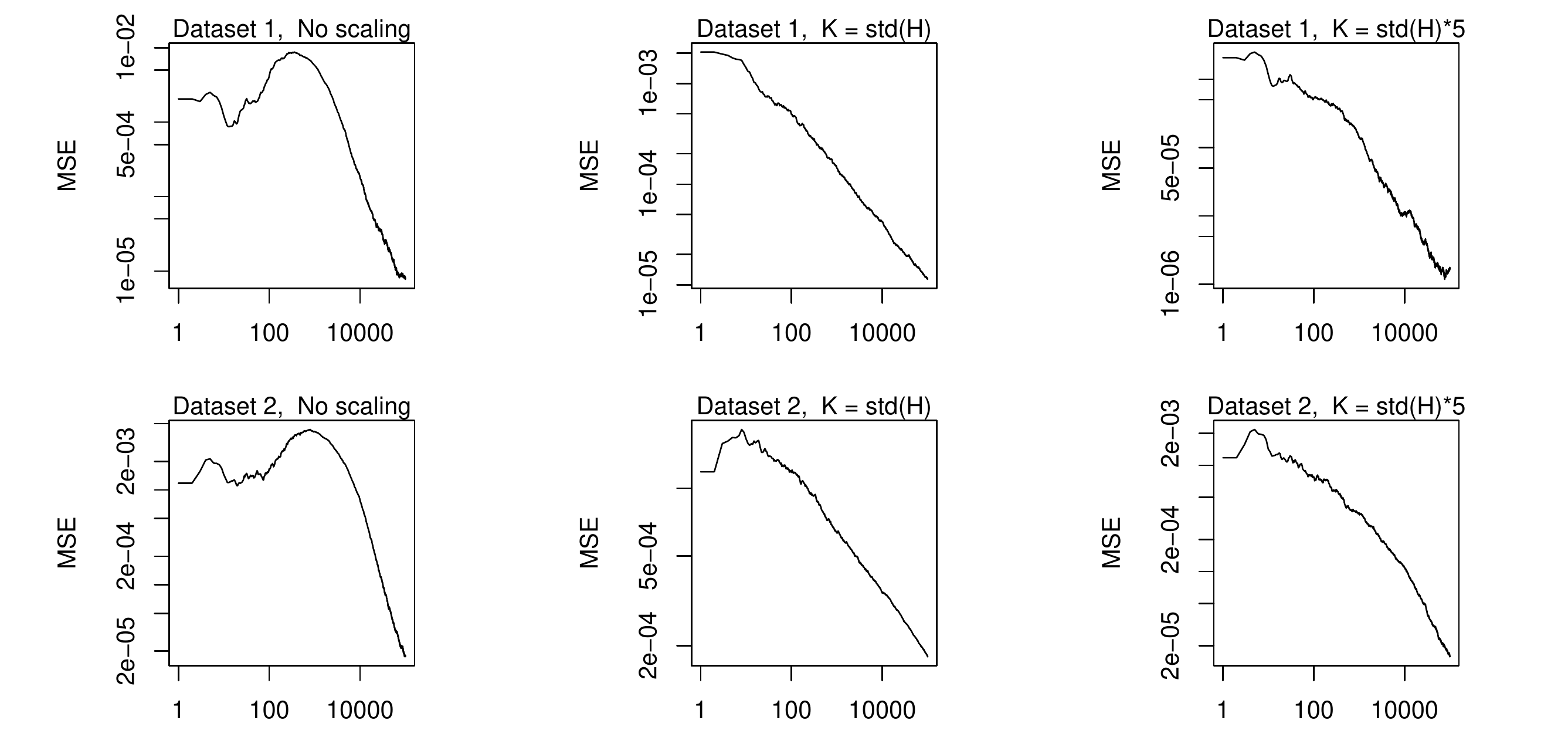}
        \caption{Investigating the effect of different scaling for two different parametrization of AR(1). For the parameters see (\ref{eq:data1}) and (\ref{eq:data2}), for the outcome see Table \ref{tab:scaling}.}
        \label{fig:scalingar1}
    \end{figure}
    \begin{figure}
        \centering
        \includegraphics[width=1\linewidth]{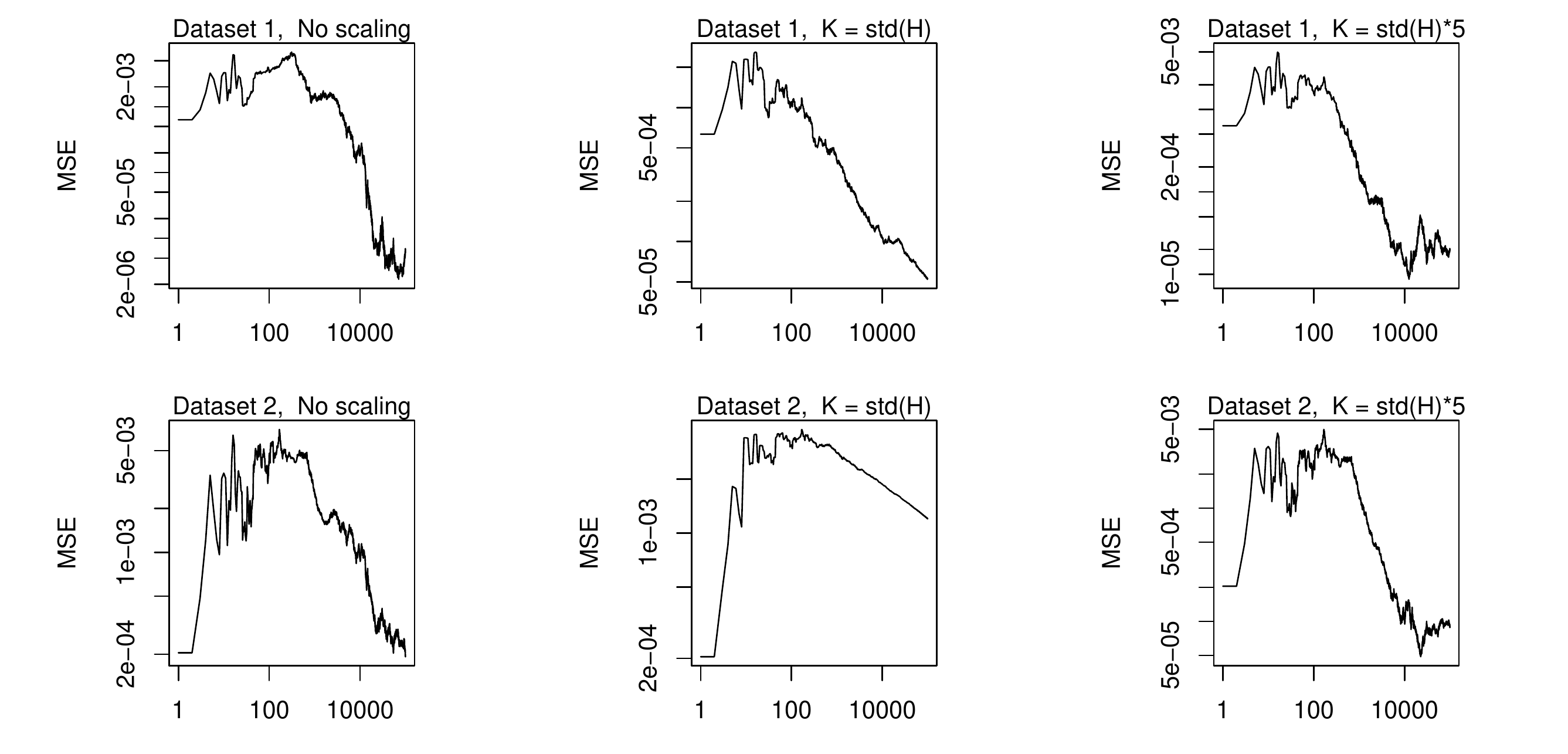}
        \caption{Investigating the effect of different scaling for two different parametrization of DGSV. For the parameters see (\ref{eq:data1}) and (\ref{eq:data2}), for the outcome see Table \ref{tab:scaling}.}
        \label{fig:scalingdgsv}
    \end{figure}

\section*{Funding}
The first author gratefully acknowledges the support of Új Nemzeti Kiválóság Program {2018/2019}, of Ministry of Human Capacities (project number: ÚNKP-18-3-IV-PPKE-21). The second author acknowledges support from the "Lendület" grant LP 2015-16 of the Hungarian Academy of Sciences (Lendület grant LM 2015-16) and supported by the NKFIH (National
Research, Development and Innovation Office, Hungary) grant KH 126505.

\bibliographystyle{unsrt}
\bibliography{bibliography}

\begin{thebibliography}{10}

\bibitem{kim2010electronic}
Kendall Kim.
\newblock {\em Electronic and algorithmic trading technology: the complete
  guide}.
\newblock Academic Press, 2010.

\bibitem{aldridge2013high}
Irene Aldridge.
\newblock {\em High-frequency trading: a practical guide to algorithmic
  strategies and trading systems}, volume 604.
\newblock John Wiley \& Sons, 2013.

\bibitem{li2014online}
Bin Li and Steven~CH Hoi.
\newblock Online portfolio selection: A survey.
\newblock {\em ACM Computing Surveys (CSUR)}, 46(3):35, 2014.

\bibitem{de2018advances}
Marcos~Lopez De~Prado.
\newblock {\em Advances in financial machine learning}.
\newblock John Wiley \& Sons, 2018.

\bibitem{robbins1951stochastic}
Herbert Robbins and Sutton Monro.
\newblock A stochastic approximation method.
\newblock {\em The annals of mathematical statistics}, pages 400--407, 1951.

\bibitem{kiefer1952stochastic}
Jack Kiefer, Jacob Wolfowitz, et~al.
\newblock Stochastic estimation of the maximum of a regression function.
\newblock {\em The Annals of Mathematical Statistics}, 23(3):462--466, 1952.

\bibitem{zhang2011stochastic}
Zhenhua Zhang, G~Yin, and Zhian Liang.
\newblock A stochastic approximation algorithm for american lookback put
  options.
\newblock {\em Stochastic Analysis and Applications}, 29(2):332--351, 2011.

\bibitem{yin2006stock}
G~Yin, Qing Zhang, F~Liu, RH~Liu, and Y~Cheng.
\newblock Stock liquidation via stochastic approximation using nasdaq daily and
  intra-day data.
\newblock {\em Mathematical Finance: An International Journal of Mathematics,
  Statistics and Financial Economics}, 16(1):217--236, 2006.

\bibitem{chow2014algorithms}
Yinlam Chow and Mohammad Ghavamzadeh.
\newblock Algorithms for cvar optimization in mdps.
\newblock In {\em Advances in neural information processing systems}, pages
  3509--3517, 2014.

\bibitem{laruelle2012stochastic}
Sophie Laruelle and Gilles Pag{\`e}s.
\newblock Stochastic approximation with averaging innovation applied to
  finance.
\newblock {\em Monte Carlo Methods and Applications}, 18(1):1--51, 2012.

\bibitem{kibzun2001discrete}
Andrey Kibzun and Riho Lepp.
\newblock Discrete approximation in quantile problem of portfolio selection.
\newblock In {\em Stochastic Optimization: Algorithms and Applications}, pages
  121--135. Springer, 2001.

\bibitem{laruelle2011optimal}
Sophie Laruelle, Charles-Albert Lehalle, and Gilles Pages.
\newblock Optimal split of orders across liquidity pools: a stochastic
  algorithm approach.
\newblock {\em SIAM Journal on Financial Mathematics}, 2(1):1042--1076, 2011.

\bibitem{cont2001empirical}
Rama Cont.
\newblock Empirical properties of asset returns: stylized facts and statistical
  issues.
\newblock {\em Quantitative Finance}, pages 223--236, 2001.

\bibitem{algoet1988asymptotic}
Paul~H Algoet, Thomas~M Cover, et~al.
\newblock Asymptotic optimality and asymptotic equipartition properties of
  log-optimum investment.
\newblock {\em The Annals of Probability}, 16(2):876--898, 1988.

\bibitem{nika2018log}
Zsolt Nika and Miklos Rasonyi.
\newblock Log-optimal portfolios with memory effect.
\newblock {\em Applied Mathematical Finance}, pages 1--29, 2018.

\bibitem{cameron2005microeconometrics}
A~Colin Cameron and Pravin~K Trivedi.
\newblock {\em Microeconometrics: methods and applications}.
\newblock Cambridge university press, 2005.

\end{thebibliography}

\end{document}